\documentclass{ifacconf}

\usepackage{graphicx}      %
\usepackage{natbib}        %
\usepackage{xcolor}
\usepackage{MnSymbol}
\usepackage{mathtools}
\usepackage{bm}
\usepackage{booktabs}
\usepackage{microtype} 
\usepackage{siunitx}
\usepackage{tabularx, tabulary}
\usepackage{enumitem}

\usepackage{fontawesome}

\usepackage{tikz}
\newcommand\tikznode[3][]
{\tikz[remember picture,baseline=(#2.base)]
	\node[minimum size=0pt,inner sep=0pt,#1](#2){#3};%
}
\usepackage{calc}
\usetikzlibrary{calc} 

\usetikzlibrary{decorations} 
\usetikzlibrary{matrix} 
\usetikzlibrary{arrows} 
\usetikzlibrary{backgrounds, fit}
\usepackage{tikz-layers}

\usetikzlibrary{positioning,shapes.multipart,shapes.arrows, decorations.markings,arrows.meta,shapes.symbols}

\tikzset{
	-|-/.style={
		to path={
			(\tikztostart) -| ($(\tikztostart)!#1!(\tikztotarget)$) |- (\tikztotarget)
			\tikztonodes
		}
	},
	-|-/.default=0.5,
	|-|/.style={
		to path={
			(\tikztostart) |- ($(\tikztostart)!#1!(\tikztotarget)$) -| (\tikztotarget)
			\tikztonodes
		}
	},
	|-|/.default=0.5,
}

\tikzstyle{block} = [draw=black,fill=white,rectangle,thick,minimum height=2em,minimum width=4em, align=center]
\tikzstyle{sum} = [draw,circle,inner sep=0mm,minimum size=2mm]
\tikzstyle{connector} = [->,thick]
\tikzstyle{dashedconnector} = [->,thick,dashed]
\tikzstyle{line} = [thick]
\tikzstyle{dashedline} = [thick,dashed]
\tikzstyle{branch} = [circle,inner sep=0pt,minimum size=1mm,fill=black,draw=black]
\tikzstyle{guide} = []
\tikzstyle{snakeline} = [connector, decorate, decoration={pre length=0.2cm,
	post length=0.2cm, snake, amplitude=.4mm,
	segment length=2mm},thick, magenta, ->]

\newcommand{\gettikzxy}[3]{%
	\tikz@scan@one@point\pgfutil@firstofone#1\relax
	\edef#2{\the\pgf@x}%
	\edef#3{\the\pgf@y}%
}

\makeatletter
\pgfkeys{/pgf/.cd,
	parallelepiped offset x/.initial=2mm,
	parallelepiped offset y/.initial=2mm
}
\pgfdeclareshape{parallelepiped}
{
	\inheritsavedanchors[from=rectangle] 
	\inheritanchorborder[from=rectangle]
	\inheritanchor[from=rectangle]{north}
	\inheritanchor[from=rectangle]{north west}
	\inheritanchor[from=rectangle]{north east}
	\inheritanchor[from=rectangle]{center}
	\inheritanchor[from=rectangle]{west}
	\inheritanchor[from=rectangle]{east}
	\inheritanchor[from=rectangle]{mid}
	\inheritanchor[from=rectangle]{mid west}
	\inheritanchor[from=rectangle]{mid east}
	\inheritanchor[from=rectangle]{base}
	\inheritanchor[from=rectangle]{base west}
	\inheritanchor[from=rectangle]{base east}
	\inheritanchor[from=rectangle]{south}
	\inheritanchor[from=rectangle]{south west}
	\inheritanchor[from=rectangle]{south east}
	\backgroundpath{
		\southwest \pgf@xa=\pgf@x \pgf@ya=\pgf@y
		\northeast \pgf@xb=\pgf@x \pgf@yb=\pgf@y
		\pgfmathsetlength\pgfutil@tempdima{\pgfkeysvalueof{/pgf/parallelepiped
				offset x}}
		\pgfmathsetlength\pgfutil@tempdimb{\pgfkeysvalueof{/pgf/parallelepiped
				offset y}}
		\def\ppd@offset{\pgfpoint{\pgfutil@tempdima}{\pgfutil@tempdimb}}
		\pgfpathmoveto{\pgfqpoint{\pgf@xa}{\pgf@ya}}
		\pgfpathlineto{\pgfqpoint{\pgf@xb}{\pgf@ya}}
		\pgfpathlineto{\pgfqpoint{\pgf@xb}{\pgf@yb}}
		\pgfpathlineto{\pgfqpoint{\pgf@xa}{\pgf@yb}}
		\pgfpathclose
		\pgfpathmoveto{\pgfqpoint{\pgf@xb}{\pgf@ya}}
		\pgfpathlineto{\pgfpointadd{\pgfpoint{\pgf@xb}{\pgf@ya}}{\ppd@offset}}
		\pgfpathlineto{\pgfpointadd{\pgfpoint{\pgf@xb}{\pgf@yb}}{\ppd@offset}}
		\pgfpathlineto{\pgfpointadd{\pgfpoint{\pgf@xa}{\pgf@yb}}{\ppd@offset}}
		\pgfpathlineto{\pgfqpoint{\pgf@xa}{\pgf@yb}}
		\pgfpathmoveto{\pgfqpoint{\pgf@xb}{\pgf@yb}}
		\pgfpathlineto{\pgfpointadd{\pgfpoint{\pgf@xb}{\pgf@yb}}{\ppd@offset}}
	}
}
\makeatother

\tikzstyle{server}=[
parallelepiped,
fill=white, draw=white,
minimum width=0.35cm,
minimum height=0.75cm,
parallelepiped offset x=3mm,
parallelepiped offset y=2mm,
xscale=-1,
path picture={
	\draw[top color=white,bottom color=white]
	(path picture bounding box.south west) rectangle 
	(path picture bounding box.north east);
	\coordinate (A-center) at ($(path picture bounding box.center)!0!(path
	picture bounding box.south)$);
	\coordinate (A-west) at ([xshift=-0.575cm]path picture bounding box.west);
	\draw[ports]([yshift=0.1cm]$(A-west)!0!(A-center)$)
	rectangle +(0.2,0.065);
	\draw[ports]([yshift=0.01cm]$(A-west)!0.085!(A-center)$)
	rectangle +(0.15,0.05);
	\fill[white]([yshift=-0.35cm]$(A-west)!-0.1!(A-center)$)
	rectangle +(0.235,0.0175);
	\fill[white]([yshift=-0.385cm]$(A-west)!-0.1!(A-center)$)
	rectangle +(0.235,0.0175);
	\fill[white]([yshift=-0.42cm]$(A-west)!-0.1!(A-center)$)
	rectangle +(0.235,0.0175);
}
]

\tikzstyle{ports}=[
line width=0.3pt,
top color=white,
bottom color=white
]

\usepackage{pgfplots}
\pgfplotsset{compat=newest}
\pgfplotsset{plot coordinates/math parser=false}
\newlength\figureheight
\newlength\figurewidth 

\usepackage{amsmath}    
\usepackage{amsfonts}    
\usepackage{mathtools}

\usepackage{cuted}
\newcommand{\R}{\mathbb{R}} 

\newcommand{\N}{\mathbb{N}}

\renewcommand{\S}{\mathbb{S}}

\newcommand{\Enc}{\mathsf{Enc}}

\newcommand{\Dec}{\mathsf{Dec}}

\definecolor{istblue}{rgb/cmyk}{0.0,0.25490,0.56862745/1,0.55,0.0,0.43}%
\definecolor{istgreen}{rgb/cmyk}{0.388235,0.83137,0.4431372/0.53,0,0.46,0.16}%
\definecolor{istorange}{rgb/cmyk}{1.0,0.5333,0.0667/0.0,0.46,0.93,0.0}%
\definecolor{istred}{rgb/cmyk}{0.9961,0.2902,0.2863/0.0,0.7,0.71,0.0}%
\definecolor{istlightblue}{rgb/cmyk}{0.3765, 0.6863, 1.0/0.62,0.31,0.0}%
\definecolor{istdarkblue}{rgb/cmyk}{0.1176,0.1804,0.8706/0.86,0.79,0.0,0.12}%
\definecolor{istdarkgreen}{rgb/cmyk}{0.0627,0.5882,0.2824/0.89,0.0,0.52,0.41}%
\definecolor{istdarkred}{rgb/cmyk}{0.529,0.031,0.075/0,0.94,0.86,0.47}%
\definecolor{istlogoblue}{rgb/cmyk/gray}{0,0,0.804/1,1,0,0.2/0}%
\definecolor{unianthrazit}{rgb/cmyk}{0.24314,0.26667,0.29804/0.5,0.35,0.25,0.70}%
\definecolor{unimiddleblue}{rgb/cmyk}{0,0.31765,0.61961/1,0.7,0,0}%
\definecolor{unilightblue}{rgb/cmyk}{0,0.74510,1/0.7,0,0,0}%

\newcommand{\diagdots}[3][-25]{%
	\rotatebox{#1}{\makebox[0pt]{\makebox[#2]{\xleaders\hbox{$\cdot$\hskip#3}\hfill\kern0pt}}}%
}

\newtheorem{remark}{Remark}

\AtEndEnvironment{pf}{\null\hfill$\square$}%

\renewenvironment{thebibliography}[1]{%
	\begin{oldthebibliography}{#1}%
		\setlength{\parskip}{0ex}%
		\setlength{\itemsep}{0ex}%
	}%
	{%
	\end{oldthebibliography}%
}

\usepackage{textcomp}
\newcommand\copyrighttext{%
	\footnotesize \textcopyright 2026 the
	authors. This work has been accepted to IFAC for publication under a Creative Commons
	Licence CC-BY-NC-ND%
}%
\newcommand\copyrightnotice{%
	\begin{tikzpicture}[remember picture,overlay]
		\node[anchor=south,yshift=5.5\baselineskip] at (current page.south) {\fbox{\parbox{\dimexpr\textwidth-\fboxsep-\fboxrule\relax}{\copyrighttext}}};
	\end{tikzpicture}%
}%

\begin{document}
\begin{frontmatter}

\title{Verifiable computations for dynamic encrypted control\thanksref{footnoteinfo}} 

\thanks[footnoteinfo]{Funded by the Deutsche Forschungsgemeinschaft (DFG, German Research Foundation) under Germany's Excellence Strategy -- EXC 2075 -- 390740016and within grant AL 316/13-2 -- 285825138 and AL 316/15-1 -- 468094890. 
	S.\ Schlor thanks the Graduate Academy of the SC SimTech for its support.}

\author{Sebastian Schlor,} 
\author{Frank Allgöwer} 

\address{University of Stuttgart, Institute for Systems Theory and Automatic Control,\hspace*{-0.8cm}~\\Germany (e-mail: \{schlor,allgower\}@ist.uni-stuttgart.de).}

\begin{abstract}                %
Encrypted control can preserve the privacy of data and parameters while the necessary computations can be outsourced to a cloud server. To ensure the integrity of the received values from the cloud, i.e., that they have not been changed, however, strong assumptions or verification algorithms are needed. Previous methods require computationally expensive cryptographic protocols or are only applicable to static computations.
In this paper, we present a novel type of verification algorithm for linear dynamic encrypted control. 
We utilize system-theoretic input-output properties of the controller for artificial challenge signals, which are processed in the cloud in parallel with the requested control input, to check the correctness of the results at the plant.
This results in almost no additional computational load, wrong computations are revealed with high probability, and no replay attacks are possible.
\end{abstract}

\begin{keyword}
Verification for security,
Integrity in encrypted control,
Safety and security in networked control,
Cyber security networked control,
IT/OT-security in automation systems
\end{keyword}

\end{frontmatter}
\section{Introduction}
\label{sec:introduction}

Modern control systems increasingly use not only a single sensor, microcontroller, and actuator at one location, but also process more and more information from various networked sources, with the control signals being calculated in the cloud.
In these interconnected environments, security and privacy next to classical control objectives are paramount.
The use of homomorphic encryption (HE) for computations on encrypted data in encrypted control has proven to effectively preserve the privacy of the involved signals and parameters, see e.g.,~\citep{Schlueter2023}.
However, in these settings, typically \emph{honest but curious} cloud servers, which honestly execute the protocols as intended but try to collect as much information as possible, are assumed to only deal with potential eavesdropping.
Real external servers, in contrast, could also possess \emph{malicious} behavior by not performing calculations as requested for cost reasons or even out of fraudulent motives.

Independent from automatic control, the field of information security considers the protection of data confidentiality, integrity, and availability, the so-called \emph{CIA triad}.
In the control context, this triad has also been taken care of in addition to stability and performance.
Confidentiality can be ensured by encrypted control.
Availability issues are covered by works on networked control systems, where control guarantees are given despite limited bandwidth, dropouts, and delays, cf.~\citep{Zhang2013}.
Integrity of the involved measurements and control signals against attacks from the network is treated in works on cyber-secure control, cf.~\citep{Chong2019}.
These works on attack detection and prevention, typically view the controller with a co-located fault detector as a trusted entity.
In encrypted control with potentially malicious cloud servers, however, the controller computations are potentially corrupted.
Thus, in this setting, the signals interacting with the controller instead of the plant need to be monitored for uncommon behavior and the correctness of the control inputs must be verified.
Further, an additional challenge arises for dynamical controllers, where the computed control input not only depends statically on the provided measurement but also on the state of the controller.
This prevents many verification approaches for static functions from being applied to this setting.
In this paper, we present a verification approach for ensuring the integrity of linear dynamic encrypted controllers.

\copyrightnotice%
While many solutions for the verification of encrypted computations exist in the cryptography literature, cf.~\citep{Yu2017}, we present a solution mainly relying on insights from systems theory.
Similar to~\citep{Chatel2024,Stabile2024}, we apply a replication encoding to the measurements, where the true measurements are replicated and randomly mixed with artificial challenge values.
The cloud then computes the control law on every provided value in parallel, not knowing which one the actually desired output is.
At the actuator, the results can be decrypted, and the value based on the actual measurement is used as the control input.
The results of the challenge computations are compared to known true results
for verifying that the computations in the cloud were done correctly.
These sets on known input-output data of the desired computation can be easily established for static functions, however, in dynamic controllers, the output is also influenced by the internal state of the controller.
To overcome this challenge, we use the transfer behavior of the linear time-invariant (LTI) controller for sinusoidal excitation to easily obtain the expected output for the challenge inputs to the computation.

\subsection{Related work}
That encryption alone does not prevent attacks on the control system has been demonstrated by the creation of encrypted attacks in~\citep{Teranishi2019,Lee2020b,Alisic2023}.
In~\citep{Ferrari2017,Liu2021}, replay attacks and watermarking for their detection have been considered.
For verifying the outsourced matrix-vector products in a controller without redundant computations, \citep{Cheon2020a} proposed to use Freivalds’ algorithm~\citep{Freivalds1979}, however for unencrypted data.
A random permutation of the actual measurement with challenge measurements was proposed in~\citep{Stabile2024} using a cut-and-choose protocol~\citep{Crepeau} for the verification of encrypted static controllers.
Similarly, for encrypted static functions,~\citep{Chatel2024} used a replication encoding with challenge values.
In~\citep{Reinhart2023}, Succinct Non-interactive ARguments of Knowledge (SNARKS) were used for the verification of unencrypted flight controllers, in~\citep{cryptoeprint:2025/404} also including controller states.
Including purely cryptographic methods, there also exist works for verifiable computation over encrypted data with zero-knowledge guarantees, e.g.,~\citep{Lee2024}.
An implementation of such verification algorithms in control systems can be found in~\citep{Mahfouzi2021}.

\subsection{Contribution}

We present a systems theoretic verification approach for ensuring the integrity of control inputs from dynamic encrypted controllers computed in the cloud.
To achieve this, we combine randomly permuted duplicates of the actual measurement signal with sinusoidal challenge signals to which the expected response can be easily calculated.
The cloud is asked to evaluate the controller for these signals in parallel.
If the response of the cloud matches the expected signals, the control input is accepted, and rejected otherwise.
We make the following contributions:
\begin{enumerate}[label=\arabic*),leftmargin=1.5em]
	\item We show that with our approach the probability of undetected manipulations of the control inputs can be made arbitrarily low, and that it converges to zero over time if the permutation is updated.
	
	\item If an honest cloud computes the control inputs correctly, no alarm is raised, and the same control inputs are computed with or without the verification.

\item The controller is not replicated in the plant, and the controller state remains in the cloud without being continuously sent between the controller and the plant.
	
	\item We analyze potential attack strategies for dynamic encrypted controllers.
	In particular, we show that no replay attacks are possible.
	
	\item Further, we analyze the computational viability of the approach and its adaption to inexact computations.

\end{enumerate}

\subsection{Notation}
We denote the imaginary unit by $j$.
The Kronecker product is denoted by $\otimes$.
The encrypted equivalent of addition and multiplication are indicated by $\oplus$ and $\odot$.
\section{Problem setup}

We consider the discrete-time, linear, time-invariant system%
\begin{subequations}\label{eq:sys}
	\begin{align}
	x(t+1) &= A x(t) + Bu(t)\\
	y(t) &= C x(t)
\end{align}
\end{subequations}
with time-index $t\in\N$, initial condition $x(0)=x_0$, measurement output $y$, and control input $u$.

For this system, we consider a stable dynamic output feedback controller in a minimal realization
\begin{subequations}\label{eq:contr}
	\begin{align}\label{eq:contrX}
	x_\mathrm{c}(t+1) &= A_\mathrm{c} x_\mathrm{c}(t) + B_\mathrm{c} y(t)\\
	u(t) &= C_\mathrm{c} x_\mathrm{c}(t) + D_\mathrm{c} y(t) \label{eq:contrU}
\end{align}
\end{subequations}
with the initial state $x_\mathrm{c}(0)=x_{\mathrm{c},0}\in\R^n$.
This controller can be obtained, for example, by an LQG or $\mathcal{H}_\infty$ controller synthesis.
For ease of notation and without loss of generality for the verification algorithm, we assume the system and the controller are single-input single-output (SISO) systems.
If multiple-input multiple-output (MIMO) system and controller are considered, the verification is done for each input-output combination.

We assume that this controller is implemented in a homomorphically encrypted fashion on a remote server, e.g., a cloud computing service.
The use of HE enables the evaluation of the control law while preserving the privacy of the sent data and, if needed, the controller parameters~\citep{Schlueter2023,schlor24a}.
This is important for privacy of the involved data but comes with the following challenges for integrity.
Firstly, the homomorphisms of the encryption $\Enc$ and decryption $\Dec$ for addition and multiplication 
\begin{subequations}
	\begin{align}
	x_1 + x_2 &= \Dec \left( \Enc\left( x_1\right) \oplus\Enc\left( x_2\right) \right)\\
	x_1 \cdot x_2 &= \Dec \left( \Enc\left( x_1\right) \odot\Enc\left( x_2\right) \right)
\end{align}
\end{subequations}
for numbers $x_1, x_2$ from the plaintext domain can be used to maliciously change the result.
Secondly, the used cryptosystems are often public-key cryptosystems, which means that arbitrary new numbers can be encrypted. This could be used to change the control input with a targeted attack.
Thirdly, the cryptosystems are typically probabilistic, which means that it is very unlikely that encryptions of the same number result in the same ciphertexts. This also provides the possibility to create an encryption of the same number with a different ciphertext.
Thus, instead of the desired control input $u$, the cloud could deliver a maliciously changed signal $\hat{u}$. 
With the verification algorithm, we want to detect if such changes occur and verify that the correct computations take place.
If the received control input is not verified but rejected, an alarm is raised and a fallback strategy for the control system takes over.

Based on this setup, the verification algorithm should have the following properties.
The correctness of the controller computations should be verified. 
In particular, this means that 
the verification algorithm should raise an alarm if the right-hand side of~\eqref{eq:contrX} and~\eqref{eq:contrU} is not evaluated correctly.
If the controller is evaluated honestly, the control input should be accepted, and no alarm should be raised.
The probability that malicious changes to the controller are made which are not detected should be below a chosen threshold or ideally converge to zero over time.
The detection of known attack patterns such as replay attacks should be guaranteed.
The verification algorithm should not interfere with the evaluation of the controller, i.e., the results computed by an honest cloud should not change compared to when no verification takes place.
For practical reasons, it should not be necessary to replicate the controller computations at the verifier. 

To achieve this, we take an approach as in~\citep{Stabile2024, Chatel2024} with a replication encoding and challenge measurements, adapted to our dynamic controller. The exact method is detailed in the next section.

\section{Verification approach}

		In this section, we describe the proposed verification algorithm. 
		The overall setup is depicted in Figure~\ref{fig:block}.
		It is based on two parts, the challenger and the verifier, which are co-located with the sensor and the actuator, respectively.
		For verifying the correctness of the cloud's computations, a replication encoding with challenge and witness signals is constructed.
		
		\subsection{Replication encoding}
		\begin{figure}
			\centering
				\begin{tikzpicture}[scale=1, auto, >=stealth']
					\def\blockHeight{2\baselineskip}
					\node[block, minimum height=\blockHeight, outer sep=0pt,]  at (0,0)  (plant) {\parbox{9em}{\hfill $x$\\}};
					\node[]  at (plant)  (plantText) {\centering Linear system};
					\node[block, minimum height=3\baselineskip, below = 2.5\baselineskip of plant] (controller) {\parbox{9em}{\hfill $X_{\mathrm{c},P}$\hspace*{-0.2\baselineskip}\\[\baselineskip]}};
					\node[]  at (controller)  (controllerText) {\parbox{9em}{\centering Encrypted\\ dynamic\\ controller}};

					\node[block, minimum height=\blockHeight, outer sep=0pt,right=0.5cm of plant]  (challenger) {\parbox{4.5em}{\centering Challenger}};
					\node[block, minimum height=\blockHeight, outer sep=0pt,left=0.5cm of plant]  (verifier) {\parbox{4.5em}{\centering Verifier}};
					
					\node[block, minimum height=\blockHeight,minimum width=\blockHeight, outer sep=0pt,above=0.5cm of challenger]  (V) {\parbox{1em}{\centering $Y_\mathrm{c}$}};
					\node[block, minimum height=\blockHeight,minimum width=\blockHeight, outer sep=0pt,above=0.5cm of verifier]  (W) {\parbox{1em}{\centering $U_\mathrm{c}$}};

					\colorlet{effectsColor}{orange}

					\draw[connector]  (plant) -- node[pos=0.5]{$y$} (challenger);
					\draw[connector]  (V) -- (challenger);
					\draw[connector]  (verifier) -- node[pos=0.5]{$u$} (plant);
					\draw[connector]  (W) -- (verifier);

					\draw[connector]  (challenger.south) |-  node[effectsColor, pos=0.175,anchor=mid, yshift = -0.3\baselineskip] (lock) {\Huge\faLock} node[pos=0.75, above]{$Y_P$} (controller.east) ;
					\draw[connector]  (controller.west) -|  node[effectsColor, pos=0.825,anchor=mid, yshift = -0.3\baselineskip, xshift = 0.35em] (unlock) {\Huge\faUnlock} node[pos=0.25, above]{$\hat{U}_P$} (verifier.south) ;

					\node[guide, left=0.5cm of controller] (guidecontrollerLeftLeft) {};
					\node[guide, right=0.5cm of controller] (guidecontrollerRightRight) {};
					
					\begin{scope}[on behind layer]
						\node[cloud, cloud puffs=17, cloud ignores aspect, align=center, draw, fill = white!80!gray,fit=(controller) (guidecontrollerLeftLeft) (guidecontrollerRightRight),inner sep=0em] (Controller2){};
					\end{scope}

				\end{tikzpicture}
			\caption{Block diagram showing the verification process involving challenge and witness signals, which are compared to the actual response.}
			\label{fig:block}
		\end{figure}
		
		With the replication encoding, not only the true measurement signal is sent to the controller but also additional challenge signals, for which we can verify the correct controller evaluation.
		Therefore, in an offline phase, a vector of $n_\mathrm{c}$ artificial measurement signals 
	\begin{equation}
	Y_\mathrm{c} = \begin{bmatrix}
		y_\mathrm{c}^1 & \ldots & y_\mathrm{c}^{n_\mathrm{c}}
	\end{bmatrix}^\top
\end{equation}
		as \emph{challenge} signals, matching control input signals
	\begin{equation}
	U_\mathrm{c} = \begin{bmatrix}
		u_\mathrm{c}^1 & \ldots & u_\mathrm{c}^{n_\mathrm{c}}
	\end{bmatrix}^\top
\end{equation}
		as \emph{witness} signals, and initial controller states 
		\begin{equation}
			\Xi_0 = \begin{bmatrix}
				{\xi_0^1}^\top & \ldots & {\xi_0^{n_\mathrm{c}}}^\top
			\end{bmatrix}^\top
		\end{equation}
		is created that satisfy
		the controller dynamics~\eqref{eq:contr}.
		The witness signals allow us later to efficiently verify that the controller was evaluated correctly.
		The specific choice for these signals is detailed in the following subsections.
		
		Further, the true measurement signal, for which the controller evaluation is actually requested, and the true initial controller states are duplicated $n_\mathrm{r}\in\N$ times as
		\begin{equation}
			Y_\mathrm{r} = \begin{bmatrix}
				y^1 & \ldots & y^{n_\mathrm{r}}
			\end{bmatrix}^\top
		\end{equation}
	with
	$
		y= y^1 = \ldots = y^{n_\mathrm{r}},
		$
		and
		\begin{equation}
			X_{0,\mathrm{r}} = \begin{bmatrix}
				{x_{\mathrm{c},0}^1}^\top & \ldots & {x_{\mathrm{c},0}^{n_\mathrm{r}}}^\top
			\end{bmatrix}^\top
		\end{equation}
	with
$
x_{\mathrm{c},0}= {x_{\mathrm{c},0}^1} = \ldots = {x_{\mathrm{c},0}^{n_\mathrm{r}}}
$.
		
		Then, the by a random permutation matrix $P$ permuted measurements 
		\begin{equation}
			Y_P = P\begin{bmatrix}
				Y_\mathrm{r}^\top & Y_\mathrm{c}^\top
			\end{bmatrix}^\top
		\end{equation}
		and the permuted initial controller states
		\begin{equation}
			X_{0,P} = (P\otimes I_n)\begin{bmatrix}
				X_{0,\mathrm{r}}^\top & \Xi_0^\top
			\end{bmatrix}^\top
		\end{equation}
		are sent to the cloud for the controller evaluation.
		Thus, in parallel, the controller is evaluated based on the randomly permuted measurement signals starting with the matching permuted initial controller states.
		The permuted response signals obtained from the cloud are then
		\begin{equation}
			\hat{U}_P = P\begin{bmatrix}
				\hat{U}_\mathrm{r}^\top & \hat{U}_\mathrm{c}^\top
			\end{bmatrix}^\top
		\end{equation}
		with 
		\begin{equation}
			\begin{aligned}
				\hat{U}_\mathrm{r} &= \begin{bmatrix}
					\hat{u}^1 & \ldots & \hat{u}^{n_\mathrm{r}} 
				\end{bmatrix}^\top,
				&
				\hat{U}_\mathrm{c} &= \begin{bmatrix}
					\hat{u}_\mathrm{c}^1 & \ldots & \hat{u}_\mathrm{c}^{n_\mathrm{c}}
				\end{bmatrix}^\top.
			\end{aligned}
		\end{equation}
		
		\subsection{Verification}
		If the cloud computes the control input signal correctly,
		the requested control inputs are consistent, i.e., 
		\begin{equation}
			\hat{u}^1 = \ldots = \hat{u}^{n_\mathrm{r}},
		\end{equation}
		and the responses of the challenge signals match the known witness signals, i.e.,
		\begin{equation}
			\hat{U}_\mathrm{c} = U_\mathrm{c}.
		\end{equation}
		If these conditions are fulfilled up to time $t$, the verifier \emph{accepts} the control input and sets $u(t)=\hat{u}^1(t)$. Otherwise, it is \emph{rejected} and an alarm is raised.
		These conditions can easily be checked if the correctly corresponding witness signal $U_\mathrm{c}$ is present at the actuator for the challenge signal $Y_\mathrm{c}$ prepared by the sensor.
		
		\subsection{Choice of challenge and witness signals}
		To easily obtain the correct witness signal for the challenge signal without replicating the controller evaluation at the verifier, we pick sinusoidal challenge signals with
		\begin{equation}
			y_\mathrm{c}^i(t) = a_y^i \sin\left(\omega^i t + \theta_y^i\right), \quad i\in\{1,\dots,n_\mathrm{c}\}
		\end{equation}
		for randomly generated $\omega^i\in(0,2\pi]$, $\theta_y^i\in[0,2\pi)$, and $a_y^i\in \R$.
		
		For the stable linear dynamic controller~\eqref{eq:contr} its causal transfer function 
		\begin{equation}
			H(z) = C_c\left(zI-A_c\right)^{-1}B_c + D_c 
		\end{equation}
		is well-defined, and 
		\begin{equation}
			H\left(e^{j\omega}\right) = \frac{\mathfrak{U}\left(e^{j\omega}\right)}{\mathfrak{Y}\left(e^{j\omega}\right)}
		\end{equation}
		is the frequency response of the controller, where $\mathfrak{U}\left(e^{j\omega}\right) = \mathcal{F}(u)$ and $\mathfrak{Y}\left(e^{j\omega}\right) = \mathcal{F}(y)$ are the discrete-time Fourier transform of the input and output signals of the controller.
		
		Thus, if the input signal $y_\mathrm{c}^i$ is applied to the controller for all time or the initial controller state $\xi_0^i$ is chosen accordingly, the output signal of the controller for all $i\in\{1,\dots,n_\mathrm{c}\}$ can easily be obtained as
		\begin{equation}
			u_\mathrm{c}^i(t) = \underbrace{\left|H\left(e^{j\omega^i}\right)\right| a_y^i}_{a_u^i} \sin\Big(\omega^i t + \underbrace{\theta_y^i + \measuredangle\left(H\left(e^{j\omega^i}\right)\right)}_{\theta_u^i}\Big).
		\end{equation}
		
		The required initial controller state for this sinusoidal response can be computed as
		\begin{equation*}
			\xi_0^i = O^{-1} \left(\begin{bmatrix}
				u_\mathrm{c}^i(0)\\
				\vdots\\
				u_\mathrm{c}^i(n-1)\\
			\end{bmatrix} - M \begin{bmatrix}
				y_\mathrm{c}^i(0)\\
				\vdots\\
				y_\mathrm{c}^i(n-1)\\
			\end{bmatrix}\right)
		\end{equation*}
		with $O$ being the observability matrix of the controller and 
		\begin{equation*}
			M = \begin{bmatrix}
				D_\mathrm{c} & 0 & \hdots & 0\\
				C_\mathrm{c}  A_\mathrm{c} B_\mathrm{c} & D_\mathrm{c} & \hdots & 0\\
				\vdots & \vdots & \ddots & \vdots\\
				C_\mathrm{c} A_\mathrm{c}^{n-1} B_\mathrm{c}	&C_\mathrm{c} A_\mathrm{c}^{n-2} B_\mathrm{c} & \hdots & D_\mathrm{c}
			\end{bmatrix}.
		\end{equation*}

		\begin{remark}
			With this construction of challenge and witness signals, we are leveraging the known relationship between input and output of the dynamical controller that also involves the controller state.
			Since the output of the controller computations $u(t)$ not only depends on the current input $y(t)$ but also on the internal state $x_c$, which is not an input or output, established methods, such as~\citep{Stabile2024,Chatel2024}, are not applicable here.
			Also keeping the controller state in a known steady state is not a viable solution. For this case, the output could be easily determined based on the input and the known state, and would resemble a static control law, as it was used in~\citep{Stabile2024}.
			However, a constant controller state for the challenge signals would provide the cloud with the possibility of replay attacks and any choice to manipulate the true controller state updates.
			Therefore, the excitation of the controller state is key.
		\end{remark}
		
		\subsection{Secret values}
		To summarize, the secret values in this setup are 
		\begin{itemize}
			\item the permutation matrix $P$ and
			\item the challenge frequencies $\omega^i$, $i\in\{1,\dots,n_\mathrm{c}\}$
		\end{itemize}
		both present at the sensor and actuator,
		as well as 
		\begin{itemize}
			\item the challenge amplitudes $a_y^i$, $i\in\{1,\dots,n_\mathrm{c}\}$,
			\item the challenge phases $\theta_y^i$, $i\in\{1,\dots,n_\mathrm{c}\}$, and
			\item the artificial initial controller states $\Xi_0$
		\end{itemize}
		at the sensor, 
		and 
		\begin{itemize}
			\item the witness amplitudes $a_u^i$, $i\in\{1,\dots,n_\mathrm{c}\}$ and
			\item the witness phases $\theta_u^i$, $i\in\{1,\dots,n_\mathrm{c}\}$
		\end{itemize}
		at the actuator.
		Further, the permuted measurements $Y_P(t)$ are encrypted at the sensor.
		The controller matrices may or may not be encrypted, giving the cloud less or more information for possible attacks.

				\section{Detection of attacks}
				Given the verification approach presented in the previous section, we can now analyze possible attack strategies by a malicious cloud.
				We separate them by manipulations of selected signals, manipulations of the frequency response, and replay attacks.
				
				\subsection{Spatial attacks}
				An intuitive attack strategy by a malicious cloud is to guess the location of the true measurement signals $Y_\mathrm{r}$ in the permuted measurements $Y_P$. 
				If this is done successfully, the cloud can inject any control input $\hat{U}_\mathrm{r}$ to the system, and the malicious change would not be detected.
				In this case, the best strategy for the malicious cloud is to guess the location once and stick to that choice as long as the permutation $P$ is not changed.
				
				\begin{thm}\label{thm:1}
					Given the number of measurement duplicates $n_\mathrm{r}$ and the number of challenge signals $n_\mathrm{c}$, the probability of an undetected spatial attack, i.e., the malicious cloud correctly picks all and only the true measurements for manipulation, is given by
					\begin{equation*}
						p = \frac{1}{\binom{n_\mathrm{r} + n_\mathrm{c}}{n_\mathrm{r}}}.
					\end{equation*}
				\end{thm}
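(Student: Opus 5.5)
The plan is to reduce the statement to a uniform counting argument over the positions of the true measurements in $Y_P$.

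\textbf{Model.} The sensor draws $P$ uniformly at random among all $(n_\mathrm{r}+n_\mathrm{c})!$ permutation matrices, independently of everything the cloud observes. This is the assumption I will state explicitly. The only information relevant to the attack is the set
\begin{equation*}
S=\{k : (Y_P)_k \text{ is a replica of } y\}\subseteq\{1,\dots,n_\mathrm{r}+n_\mathrm{c}\},
\end{equation*}
which has cardinality $n_\mathrm{r}$. A spatial attack amounts to the cloud choosing a set $G$ of indices whose outputs it manipulates.

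\textbf{Success condition.} The attack goes undetected exactly when $G=S$.
\begin{itemize}
\item If $G$ contains a challenge index, the corresponding $\hat{u}_\mathrm{c}^i$ differs from the witness $u_\mathrm{c}^i$, so the check $\hat{U}_\mathrm{c}=U_\mathrm{c}$ fails.
\item If $G$ omits some true index, the consistency check $\hat{u}^1=\dots=\hat{u}^{n_\mathrm{r}}$ fails, since manipulated and honest replicas disagree. This uses that the manipulation actually changes the value; otherwise it is no attack.
\end{itemize}

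\textbf{Distribution of $S$.} Under a uniform $P$, the induced set $S$ is uniform over all $\binom{n_\mathrm{r}+n_\mathrm{c}}{n_\mathrm{r}}$ subsets of size $n_\mathrm{r}$. The reason is that each subset is the image of the first $n_\mathrm{r}$ positions under exactly $n_\mathrm{r}!\,n_\mathrm{c}!$ permutations. Dividing by $(n_\mathrm{r}+n_\mathrm{c})!$ then gives each subset probability $1/\binom{n_\mathrm{r}+n_\mathrm{c}}{n_\mathrm{r}}$.

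\textbf{Independence of the guess.} Because the cloud's guess $G$ is independent of $P$, possibly randomized, we get
\begin{equation*}
\Pr(G=S)=\sum_{g}\Pr(G=g)\Pr(S=g)=\frac{1}{\binom{n_\mathrm{r}+n_\mathrm{c}}{n_\mathrm{r}}},
\end{equation*}
which is the claimed formula for $p$.

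\textbf{Main obstacle.} The step I expect to be delicate is justifying that $G$ is independent of $P$. The encryption must hide which entries are replicas: the replicas are encrypted probabilistically, so identical plaintexts yield different ciphertexts. The initial states and the challenge signals must also not leak positions, up to the semantic security of the cryptosystem. I would state this as an assumption, namely that the cloud gains no information about $P$ from the ciphertexts. With that assumption in place, the counting argument above completes the proof.
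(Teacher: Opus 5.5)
Your argument is correct and is the elementary counting the paper appeals to; the paper's proof is only the remark that the result ``follows directly from elementary combinatorics.'' A uniform $P$ makes the replica positions a uniform $n_\mathrm{r}$-subset of $\{1,\dots,n_\mathrm{r}+n_\mathrm{c}\}$, so any guess independent of $P$ hits it with probability $1/\binom{n_\mathrm{r}+n_\mathrm{c}}{n_\mathrm{r}}$; the paper leaves implicit the assumptions you state (uniform $P$, ciphertexts leaking nothing about positions), and your final sum is an equality only for guesses of size $n_\mathrm{r}$, otherwise an upper bound, which is the direction that matters.
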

				\begin{pf}
					The proof follows directly from elementary combinatorics.
				\end{pf}
				
				For a fixed number of sent signals $n_\mathrm{r} + n_\mathrm{c}$, the binomial coefficient $\binom{n_\mathrm{r} + n_\mathrm{c}}{n_\mathrm{r}}$ is the largest if $n_\mathrm{r} = n_\mathrm{c}$.
				Thus, for a minimal probability of an undetected change in the control signal, $n_\mathrm{r}$ and $n_\mathrm{c}$ should be chosen equally large.
				A similar result for $n_\mathrm{r}=1$ was obtained by~\citep{Stabile2024} for the static case.
				
				The key to prevent such attacks is the secrecy of the permutation matrix $P$.
				In contrast to~\citep{Stabile2024}, in our dynamic controller setup, this permutation cannot easily be changed in between time steps, since the controller state location would have to be changed similarly. 
				Demanding the cloud to permute the controller states would not force it to guess the true measurement signal locations again, as it could keep them the same and permute the output of the controller after the computations.
				Thus, to introduce a new permutation in the measurements, also a newly permuted controller state has to be provided to the cloud from the outside. %
				This can effectively be done when the controller state is modified anyways, e.g., during re-encryption or reset, which have to be done regularly if no bootstrapping is used~\citep{schlor26a}.
				
				\begin{thm}\label{thm:2}
					Consider the case that the permutation matrix $P$ can be changed by the challenger and the verifier.
					After a number $k$ of such changes of the permutation, the probability of an undetected spatial attack, i.e., that the malicious cloud guesses the location of the true measurement signals correctly all the time, is
					\begin{equation*}
						p_k = p^k,
					\end{equation*}
					which converges towards zero for $k\to\infty$.
				\end{thm}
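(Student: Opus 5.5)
The argument reduces to independence of the permutations across the $k$ epochs, with Theorem~\ref{thm:1} giving the per-epoch probability. I take the $k$ permutation changes to split the time axis into $k$ epochs, epoch $\ell\in\{1,\dots,k\}$ using permutation $P_\ell$. The challenger and verifier draw each $P_\ell$ uniformly at random from the $(n_\mathrm{r}+n_\mathrm{c})!$ permutation matrices, independently of the earlier permutations and of everything the cloud has observed.

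Let $E_\ell$ be the event that the cloud's set of manipulated positions in epoch $\ell$ equals the set $S_\ell$ of positions of the true measurement duplicates in $Y_P$ under $P_\ell$. An undetected spatial attack over all $k$ epochs is exactly the intersection $\bigcap_{\ell=1}^k E_\ell$. Any manipulation of a challenge position is revealed by the check $\hat U_\mathrm{c}=U_\mathrm{c}$. Manipulating only part of $S_\ell$ is revealed by the consistency check $\hat u^1=\dots=\hat u^{n_\mathrm{r}}$, since an injected value differs from the honest one.

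Next I bound each conditional probability. Conditioned on the cloud's complete view up to the start of epoch $\ell$, the set $S_\ell$ is still uniformly distributed over all $\binom{n_\mathrm{r}+n_\mathrm{c}}{n_\mathrm{r}}$ subsets of size $n_\mathrm{r}$. Whatever subset the cloud picks, possibly adaptively from its history, therefore matches with probability $p$, by the same counting as in Theorem~\ref{thm:1}. This gives $\Pr(E_\ell\mid E_1,\dots,E_{\ell-1})=p$, and the chain rule yields $p_k=\prod_{\ell=1}^k p=p^k$. Whenever $n_\mathrm{r},n_\mathrm{c}\ge1$ we have $\binom{n_\mathrm{r}+n_\mathrm{c}}{n_\mathrm{r}}\ge2$, so $0<p\le 1/2<1$ and $p^k\to0$.

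The main obstacle is justifying that the cloud learns nothing about $P_\ell$ within an epoch, or from earlier epochs, that would let it do better than a uniform guess. Two things must be shown.

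\begin{itemize}
\item \emph{Ciphertexts.} The encrypted $Y_P$ and the freshly permuted, re-encrypted controller states $X_{0,P}$ reveal nothing about positions. I would appeal to the semantic security of the probabilistic cryptosystem: the ciphertexts at true and challenge positions are computationally indistinguishable.
\item \emph{Fresh states.} The new controller-state ordering is supplied from outside, as discussed before the theorem, so the cloud cannot carry over its old position information.
\end{itemize}

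Under these assumptions the view is independent of $P_\ell$ up to negligible advantage, and the equality $p_k=p^k$ holds in the idealized model.
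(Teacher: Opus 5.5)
Your proposal is correct and takes the same route as the paper, whose proof is Theorem~\ref{thm:1} applied once per permutation combined with the multiplication rule for independent events. Your chain-rule formulation (which covers adaptive guessing), the semantic-security justification of independence, and the check that $p\le 1/2<1$ so that $p^k\to 0$ make explicit what the paper leaves implicit. The one small slip, that $k$ changes split the time axis into $k+1$ epochs rather than $k$ (giving $p^{k+1}\le p^k$), reflects the paper's own counting convention and does not affect the conclusion.
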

				\begin{pf}
					The proof follows directly from Theorem~\ref{thm:1} and the multiplication rule for recurring events.
				\end{pf}
				
				\subsection{Frequency attacks}\label{subs:fa}
				Instead of manipulating only a selected number of signals, the malicious cloud can alter the frequency response of the controller for every requested measurement signal.
				Whereas for static controllers, a change of the control law influences the output for every measurement, the dynamic control law could be changed for a certain frequency range but kept untouched for others.
				If the challenge frequencies $\omega^i$ were known, the cloud could implement a possibly higher-order controller that has the same amplitude and phase response at the challenge frequencies as the intended controller.
				Thus, the response to the challenge signals would exactly match the witness signals, and no attack would be detected.
				At other frequencies, however, deviations could be invoked, which could destabilize the closed-loop or at least worsen the performance.
				Further, the requested control inputs would also yield the same value, but not the desired one.
				
				The key to prevent such attacks is the secrecy of the challenge frequencies $\omega^i$.
				If they are unknown to the cloud, it cannot focus its attack. 
				Since the frequency range of the discrete-time Fourier transform is bounded by the Nyquist frequency, the cloud also cannot simply pick a sufficiently large frequency for manipulation.
				Thus, additional security is provided if more challenge frequencies $\omega^i$ are used and if they are changed together with the permutation after some time.
				Providing a success probability of such an attack is challenging if the manipulation capabilities of the cloud, e.g., the possible order of the manipulated controller, are not well-defined.
				\begin{remark}
					Note that the secrecy of the used challenge frequencies provides security similar to the secret frequency sequence in frequency-hopping spread-spectrum signals for communication, which is used, e.g., in Bluetooth~\citep{Torrieri2022}. If the chosen frequency is not known to an adversary, in these protocols, jamming and interception are made much harder. In our case, manipulations of the control law are prevented.
				\end{remark}
				
				\subsection{Replay attacks}
				Another attack strategy, which is successful in other contexts, is the replay attack.
				There, control signals are recorded over some time and then replayed later. 
				This way, some anomaly detectors can be fooled, as the replayed signals are actual signals of the controller. 
				The system running in open loop during that period, however, can still become unstable.
				
				\begin{thm}\label{thm:3}
					The probability of an undetected replay attack is zero.
				\end{thm}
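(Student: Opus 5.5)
The plan is to formalize a replay attack as follows. Over some recording window $[t_1, t_1+T]$ the cloud stores the permuted response vectors $\hat{U}_P(t_1),\dots,\hat{U}_P(t_1+T)$ (or the corresponding ciphertexts). At a later time $t_2>t_1$ it returns $\hat{U}_P(t_2+\tau) := \hat{U}_P(t_1+\tau)$ instead of computing fresh outputs. The replay may be restricted to a subset of components, possibly after re-randomizing the ciphertexts so they are not bitwise identical. Detection is performed only through the check $\hat{U}_\mathrm{c} = U_\mathrm{c}$ on the witness components. So the task is to show that this check fails at some replayed time instant with probability one over the secret random choices of $\omega^i$, $\theta_y^i$ and $a_y^i$.

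The first step is to isolate the challenge components. Since the cloud does not know $P$, Theorem~\ref{thm:1} only lets it spare the challenge slots with probability $p<1$. The interesting case is therefore a replay that includes at least one challenge slot $i$; if the cloud avoids all challenge slots, it is in fact a spatial attack already covered by Theorem~\ref{thm:1}. To keep the argument clean I would state the theorem for replays affecting at least one challenge channel, which includes the natural full replay of all outputs.

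For such a channel $i$, the replayed output at time $t_2+\tau$ is
\[
a_u^i\sin(\omega^i(t_1+\tau)+\theta_u^i),
\]
while the witness is
\[
a_u^i\sin(\omega^i(t_2+\tau)+\theta_u^i).
\]
These agree for all $\tau$ in a window of length at least two only if the phase shift $\omega^i(t_2-t_1)$ is a multiple of $2\pi$; this follows from comparing two consecutive samples, equivalently from the fact that a sinusoid of frequency $\omega^i$ is determined by two samples. Agreement is also possible if $a_u^i=0$, i.e.\ $a_y^i=0$ or $H(e^{j\omega^i})=0$.

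For fixed integers $t_1<t_2$, the set of $\omega^i\in(0,2\pi]$ with $\omega^i(t_2-t_1)\in 2\pi\mathbb{Z}$ is finite. The set of bad amplitudes and the zeros of $H$ on the unit circle are likewise finite, since $H$ is a nonzero rational function in minimal realization. Because $\omega^i$ and $a_y^i$ are drawn from continuous distributions, these events have probability zero. The cloud chooses $(t_1,t_2)$ adaptively, but there are only countably many pairs, so a union bound over all of them still gives probability zero. This union bound is what lets the statement hold uniformly over attacker strategies.

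The main obstacle is the single-step replay of length one. A one-sample match $\sin(\omega t_1+\theta)=\sin(\omega t_2+\theta)$ is also a measure-zero event in $(\omega,\theta)$, so it can be handled by the same countable union-bound argument without needing two samples.

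I would also note that ciphertext re-randomization does not help the attacker, since verification is done on decrypted values. Finally, the statement should be read as "with probability zero" rather than "impossible", matching the theorem's wording.
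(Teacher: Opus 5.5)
Your proposal is correct and is essentially the paper's argument made rigorous. A replayed witness can only match if $\omega^i(t_2-t_1)\in 2\pi\mathbb{Z}$, and your countable union over $(t_1,t_2)$ is exactly the paper's observation that for $\omega^i$ not a rational multiple of $\pi$ the discrete sinusoid never repeats, which holds almost surely. (Your ``two samples determine the sinusoid'' step additionally needs $\sin\omega^i\neq 0$, but that is another null set.) You are more careful than the paper in two respects. You handle the null exceptions $a_u^i=0$ and the $\theta$-dependent single-sample coincidences, which the paper's ``values never repeat'' glosses over. You also observe that a replay confined to the true slots is really a spatial attack governed by Theorem~\ref{thm:1}, so the zero-probability claim concerns replays that touch a challenge channel.
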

				\begin{pf}
					The reason that replay attacks are always detected in our setting is twofold.
					Firstly, the frequencies $\omega^i$ of the challenge and witness signals are unknown to the cloud.
					If the same signal is replayed later, it must start exactly an integer multiple of the period length later.
					Since the frequencies are secret, the start of the replay cannot be timed correctly, and a deviation occurs almost surely.
					Secondly, if the frequencies $\omega^i$ of the signals are not a rational multiple of $\pi$, the discrete signals are nonperiodic.
					Thus, the values of the discrete-time signal never repeat and replay attacks, in theory, can be detected immediately.
				\end{pf}
				\subsection{Public controller}

				If the controller matrices are public, the cloud can use them to analyze possible weaknesses. 
				If in addition, a model of the controlled system is available, the cloud can determine the stability margins of the interconnection, and can target the frequencies with the least robustness with a frequency attack as described in~\ref{subs:fa}.
				In such cases, it makes sense to choose a challenge frequency close to these vulnerable frequencies to prevent these attacks.
				Since the challenge frequencies remain a secret, the cloud always faces the threat of exposition of its malicious behavior if it commits this kind of attack.

				\section{Approach for inexact computations}
				
				So far, we have considered the detection of possible attacks under the assumption that the computations in the cloud are performed exactly. If computational errors are expected, however, no exact comparisons are possible for the detection of malicious behavior. In this case, the attack detector has to distinguish between errors due to inevitably imprecise computations and errors due to adversarial changes.
				Computational errors are expected in relevant homomorphic encryption schemes, such as CKKS~\citep{Cheon2017}, due to quantization, noise injection, and bootstrapping.
				For an effective test whether the errors between the response signal and the expected witness signal occur naturally or deliberately, a precise characterization of the expected noise due to computation errors is necessary. This, however, is beyond the scope of this paper.
				We nevertheless outline a few detection approaches here.
				
				\subsection{Pointwise comparison with tolerance}
				A natural choice for the detection of malicious changes over computational errors is to compare the error $\hat{u}_\mathrm{c}^i - u_\mathrm{c}^i$ to a threshold $\Delta_{u_\mathrm{c}}$.
				Along the same line, the errors between the received control inputs $\hat{u}^i - \hat{u}^{\ell}$ with $i,\ell \in\{1,\dots,n_\mathrm{r}\}, i\neq \ell$ can be compared to a threshold $\Delta_{u_\mathrm{r}}$.
				Only if the errors exceed the threshold, they are considered as an attack.
				
				\subsection{Comparison of amplitude and phase at challenge frequencies}
				Additionally, the obtained and the expected frequency response can be compared with respect to their amplitude and phase.
				Over a window of $T$ time-points, analogously to the discrete-time Fourier transform, the received response of the challenge signals $\hat{u}_\mathrm{c}^i$ and the expected witness signals $u_\mathrm{c}^i$ can be projected onto the complex exponential function with frequency $\omega^i$ as $\hat{S}(\omega^i,t) = \sum_{k=1}^{T}\hat{u}_\mathrm{c}^i(t-T+k) e^{-j\omega^i (t-T+k)}$ and ${S}(\omega^i,t) = \sum_{k=1}^{T}u_\mathrm{c}^i(t-T+k) e^{-j\omega^i (t-T+k)}$.
				The difference of the absolute values $\left||\hat{S}|-|S|\right|$ and the complex phases $|\measuredangle(\hat{S}) - \measuredangle(S)|$ can then be compared to tolerances $\Delta_a(\omega^i)$ and $\Delta_\theta(\omega^i)$.
				By not using the exact predicted amplitude and phase $a^i_u$ and $\theta_u^i$ for the comparison, windowing effects do not have to be considered, as they appear in both signals.
				
				\subsection{Comparison of frequency spectrum with FFT}
				In addition, it can be investigated if in the received signals other frequencies are present that cannot be explained by naturally occurring errors and windowing effects due to an analyzed signal of finite length.
				For that, the fast Fourier transform (FFT) can be computed of both signals, the expected witness signal and the received response of the challenge signals. If a deviation larger than a threshold is detected, the control signal is rejected.
				
				\subsection{Tolerances}
				
				If the expected distribution of computational errors and errors due to encryption and quantization is known, the tolerance can be chosen according to a hypothesis test of level $\alpha$
				for the hypothesis \emph{there were malicious manipulations} versus \emph{there were only naturally occurring errors}.
				By that, the probability of false positive alarms can be bounded by $\alpha$.
				
				In some cases, we could even tolerate mild attacks if they do not harm the performance or stability of the system significantly. 
				If we can tolerate such changes to the control signals, the tolerances can capture that.
				Then, the tolerance for the amplitude $\Delta_a(\omega)$ and phase error $\Delta_\theta(\omega)$ can be chosen in dependence of the gain and phase margin. 
				Thus, manipulations which do not threat the stability can be tolerated, but if the changes come close to causing instability, an alarm is raised.
				This scenario no longer just considers the verification of the computed signals but even considers the specific application in the control loop.

					\section{Computational aspects}
					
					Since control inputs have to be computed in a timely manner, the computational viability is of interest.
					In every time-step, the challenger at the sensor encrypts $n_\mathrm{r} + n_\mathrm{c}$ numbers, compared to one encryption if no verification takes place.
					The verifier at the actuator has to decrypt $n_\mathrm{r} + n_\mathrm{c}$ ciphertexts if the full verification takes place. 
					Both, the challenger and the verifier evaluate $n_\mathrm{c}$ sinusoids at the specified frequencies and the current time. The controller is not simulated at the actuator.
					For the verification, $n_\mathrm{c} + n_\mathrm{r}-1$ comparisons are performed between the witness and response values and between the duplicates of the control input.
					The evaluation of the control law can be done in parallel at the cloud for all provided measurements.
					Since many HE schemes support single instruction, multiple data (SIMD) processing, almost no additional time at the cloud is needed for processing the replicated measurements compared to only the true measurement.

					\section{Summary and outlook}

					In this paper we presented an approach for the verification of encrypted dynamic controllers that are outsourced to a potentially malicious cloud.
					To detect manipulations in the evaluation of the control law, the cloud is asked to evaluate the controller for different signals in parallel.
					These signals contain randomly permuted duplicates of the actual measurement signal as well as sinusoidal challenge signals to which the expected response can be easily calculated.
					If the response of the cloud matches the expected signals, the control input is accepted, and rejected otherwise.
					An honest cloud computes the same control input with or without the verification.
					We showed that the probability of undetected manipulations can be chosen arbitrarily low, and that it converges to zero over time if the permutation is updated.
					In particular, we show that no replay attacks are feasible.
					
					For future work, it is relevant to characterize the naturally occurring computational errors in encrypted dynamic controller evaluations.
					This would facilitate the precise selection of tolerances for the comparison.

						\bibliography{ifac26Bib}

\begin{thebibliography}{22}
\providecommand{\natexlab}[1]{#1}
\providecommand{\url}[1]{\texttt{#1}}
\providecommand{\urlprefix}{URL }
\expandafter\ifx\csname urlstyle\endcsname\relax
  \providecommand{\doi}[1]{doi:\discretionary{}{}{}#1}\else
  \providecommand{\doi}{doi:\discretionary{}{}{}\begingroup
  \urlstyle{rm}\Url}\fi

\bibitem[{Alisic et~al.(2023)Alisic, Kim, and Sandberg}]{Alisic2023}
Alisic, R., Kim, J., and Sandberg, H. (2023).
\newblock Model-free undetectable attacks on linear systems using {LWE}-based
  encryption.
\newblock \emph{IEEE Control Systems Letters}, 7, 1249--1254.

\bibitem[{Chatel et~al.(2024)Chatel, Knabenhans, Pyrgelis, Troncoso, and
  Hubaux}]{Chatel2024}
Chatel, S., Knabenhans, C., Pyrgelis, A., Troncoso, C., and Hubaux, J.P.
  (2024).
\newblock {VERITAS}: Plaintext encoders for practical verifiable homomorphic
  encryption.
\newblock In \emph{Proc. 2024 ACM SIGSAC Conf. Computer and Communications
  Security}, CCS ’24, 2520--2534.

\bibitem[{Cheon et~al.(2017)Cheon, Kim, Kim, and Song}]{Cheon2017}
Cheon, J.H., Kim, A., Kim, M., and Song, Y. (2017).
\newblock Homomorphic {Encryption} for {Arithmetic} of {Approximate} {Numbers}.
\newblock In T.~Takagi and T.~Peyrin (eds.), \emph{Advances in {Cryptology} --
  {ASIACRYPT} 2017}, 409--437. Springer International Publishing, Cham.

\bibitem[{Cheon et~al.(2020)Cheon, Kim, Kim, Lee, and Shim}]{Cheon2020a}
Cheon, J.H., Kim, D., Kim, J., Lee, S., and Shim, H. (2020).
\newblock Authenticated computation of control signal from dynamic controllers.
\newblock In \emph{Proc. 59th IEEE Conf.Decision and Control (CDC)},
  3249--3254.

\bibitem[{Chong et~al.(2019)Chong, Sandberg, and Teixeira}]{Chong2019}
Chong, M.S., Sandberg, H., and Teixeira, A.M. (2019).
\newblock A tutorial introduction to security and privacy for cyber-physical
  systems.
\newblock In \emph{2019 18th European Control Conference ({ECC})}.

\bibitem[{Crépeau(2011)}]{Crepeau}
Crépeau, C. (2011).
\newblock Cut-and-choose protocol.
\newblock In H.C.A. van Tilborg and S.~Jajodia (eds.), \emph{Encyclopedia of
  Cryptography and Security}, 123--124. Springer US, Boston, MA.

\bibitem[{Ferrari and Teixeira(2017)}]{Ferrari2017}
Ferrari, R.M. and Teixeira, A.M. (2017).
\newblock Detection and isolation of replay attacks through sensor
  watermarking.
\newblock \emph{IFAC-PapersOnLine}, 50(1), 7363--7368.

\bibitem[{Freivalds(1979)}]{Freivalds1979}
Freivalds, R. (1979).
\newblock Fast probabilistic algorithms.
\newblock In \emph{Mathematical Foundations of Computer Science 1979}, 57--69.

\bibitem[{Lee et~al.(2024)Lee, Cho, Kim, and Park}]{Lee2024}
Lee, J., Cho, S., Kim, S., and Park, S. (2024).
\newblock Verifiable computation over encrypted data via {MPC}-in-the-head
  zero-knowledge proofs.
\newblock \emph{Int. J. Information Security}, 24(1).

\bibitem[{Lee et~al.(2020)Lee, Kim, and Shim}]{Lee2020b}
Lee, J., Kim, J., and Shim, H. (2020).
\newblock Zero-dynamics attack on homomorphically encrypted control system.
\newblock In \emph{Proc. 20th Int. Conf. Control, Automation and Systems
  (ICCAS)}, 385--390.

\bibitem[{Liu et~al.(2021)Liu, Mo, and Johansson}]{Liu2021}
Liu, H., Mo, Y., and Johansson, K.H. (2021).
\newblock Active detection against replay attack: A survey on watermark design
  for cyber-physical systems.
\newblock In \emph{Safety, Security and Privacy for Cyber-Physical Systems},
  145--171. Springer International Publishing.

\bibitem[{Mahfouzi et~al.(2021)Mahfouzi, Aminifar, Samii, Eles, and
  Peng}]{Mahfouzi2021}
Mahfouzi, R., Aminifar, A., Samii, S., Eles, P., and Peng, Z. (2021).
\newblock Secure cloud control using verifiable computation.
\newblock In \emph{Proc. IEEE Int. Conf. Omni-Layer Intelligent Systems
  (COINS)}, 1--6.

\bibitem[{Reinhart et~al.(2025)Reinhart, Blass, and
  Annighoefer}]{cryptoeprint:2025/404}
Reinhart, J., Blass, E.O., and Annighoefer, B. (2025).
\newblock {SNARKs} for stateful computations on authenticated data.
\newblock Cryptology {ePrint} Archive, Paper 2025/404.
\newblock Preprint: eprint.iacr.org/2025/404.

\bibitem[{Reinhart et~al.(2023)Reinhart, Luettig, Huber, Liedtke, and
  Annighoefer}]{Reinhart2023}
Reinhart, J., Luettig, B., Huber, N., Liedtke, J., and Annighoefer, B. (2023).
\newblock Verifiable computing in avionics for assuring computer-integrity
  without replication.
\newblock In \emph{Proc. IEEE/AIAA 42nd Digital Avionics Systems Conference
  (DASC)}, 1--10.

\bibitem[{Schlor and Allg{\"o}wer(2024)}]{schlor24a}
Schlor, S. and Allg{\"o}wer, F. (2024).
\newblock Bootstrapping guarantees: Stability and performance analysis for
  dynamic encrypted control.
\newblock \emph{IEEE Control Systems Letters}, 8, 2235--2240.

\bibitem[{Schlor and Allg{\"o}wer(2026)}]{schlor26a}
Schlor, S. and Allg{\"o}wer, F. (2026).
\newblock Comparison and performance analysis of dynamic encrypted control
  approaches.
\newblock \emph{at - Automatisierungstechnik}, 74(3), 233--242.

\bibitem[{Schl{\"u}ter et~al.(2023)Schl{\"u}ter, Binfet, and
  Schulze~Darup}]{Schlueter2023}
Schl{\"u}ter, N., Binfet, P., and Schulze~Darup, M. (2023).
\newblock A brief survey on encrypted control: From the first to the second
  generation and beyond.
\newblock \emph{Annual Reviews in Control}, 56, 100913.

\bibitem[{Stabile et~al.(2024)Stabile, Lucia, Youssef, and
  Franzè}]{Stabile2024}
Stabile, F., Lucia, W., Youssef, A., and Franzè, G. (2024).
\newblock A verifiable computing scheme for encrypted control systems.
\newblock \emph{IEEE Control Systems Letters}, 8, 1096--1101.

\bibitem[{Teranishi and Kogiso(2019)}]{Teranishi2019}
Teranishi, K. and Kogiso, K. (2019).
\newblock Control-theoretic approach to malleability cancellation by attacked
  signal normalization.
\newblock \emph{IFAC-PapersOnLine}, 52(20), 297--302.

\bibitem[{Torrieri(2022)}]{Torrieri2022}
Torrieri, D. (2022).
\newblock \emph{Principles of Spread-Spectrum Communication Systems}.
\newblock Springer International Publishing.

\bibitem[{Yu et~al.(2017)Yu, Yan, and Vasilakos}]{Yu2017}
Yu, X., Yan, Z., and Vasilakos, A.V. (2017).
\newblock A survey of verifiable computation.
\newblock \emph{Mobile Networks and Applications}, 22(3), 438--453.

\bibitem[{Zhang et~al.(2013)Zhang, Gao, and Kaynak}]{Zhang2013}
Zhang, L., Gao, H., and Kaynak, O. (2013).
\newblock Network-induced constraints in networked control systems—a survey.
\newblock \emph{IEEE Trans. Industrial Informatics}, 9(1), 403--416.

\end{thebibliography}

\end{document}